%
\documentclass[runningheads]{llncs}
\usepackage[T1]{fontenc}
%
%
%
\usepackage{cite}
\usepackage{algorithm}
\usepackage{algorithmic}
\usepackage{amsmath}
\usepackage{amssymb}
\usepackage{amsfonts}
\usepackage{graphicx}
\usepackage{textcomp}
\usepackage{xcolor}
\usepackage{stfloats}
\usepackage{caption}
\usepackage{float}
\usepackage{appendix}
\usepackage{marvosym} 
\usepackage[unicode=false,pdfusetitle]{hyperref}
\bibliographystyle{splncs04}
\begin{document}
\title{Efficient and Secure Sleepy Model for BFT Consensus %
\thanks{This paper has been accepted by the 30th European Symposium on Research in Computer Security (ESORICS 2025).}
}

\author{Pengkun Ren\inst{1} \and
Hai Dong\inst{1}\thanks{\Letter~\email{hai.dong@rmit.edu.au} (Corresponding Author)}
\and
Zahir Tari\inst{1} \and
Pengcheng Zhang\inst{2}\relax}

\institute{School of Computing Technologies, Centre of Cyber Security Research and Innovation, RMIT University, Melbourne, Australia\\
    \email{s4038427@student.rmit.edu.au},
    \email{\{hai.dong, zahir.tari\}@rmit.edu.au}
    \and 
    College of Computer Science and Software Engineering, Hohai University, China\\
    \email{pchzhang@hhu.edu.cn}}
  
%
\maketitle              
\begin{abstract}
Byzantine Fault Tolerant (BFT) consensus protocols for dynamically available systems face a critical challenge: balancing latency and security in fluctuating node participation. Existing solutions often require multiple rounds of voting per decision, leading to high latency or limited resilience to adversarial behavior. This paper presents a BFT protocol integrating a pre-commit mechanism with publicly verifiable secret sharing (PVSS) into message transmission. By binding users' identities to their messages through PVSS, our approach reduces communication rounds. Compared to other state-of-the-art methods, our protocol typically requires only four network delays (4$\Delta$) in common scenarios while being resilient to up to 1/2 adversarial participants. This integration enhances the efficiency and security of the protocol without compromising integrity. Theoretical analysis demonstrates the robustness of the protocol against Byzantine attacks. Experimental evaluations show that, compared to traditional BFT protocols, our protocol significantly prevents fork occurrences and improves chain stability. Furthermore, compared to longest-chain protocol, our protocol maintains stability and lower latency in scenarios with moderate participation fluctuations.

\keywords{Byzantine fault tolerance. Distributed system. Sleepy model.}
\end{abstract}
\section{Introduction}
Recent advancements in BFT have increasingly focused on addressing the challenges of dynamically available systems. BFT consensus protocols are fundamental to distributed systems, enabling reliable agreement among participants even in the presence of malicious behavior \cite{castro1999practical}. 

Traditional consensus protocols often assume a static set of participants who remain active throughout the entire execution of the protocol\cite{castro1999practical}\cite{katz2009expected}. The Sleepy Model, introduced by Pass and Shi \cite{pass2017sleepy}, represents a significant paradigm shift by allowing nodes to dynamically switch between active and inactive states without prior notice. This model more accurately reflects the operational realities of modern distributed systems, where node availability can fluctuate due to network instability, hardware failures, or resource optimization strategies.

Protocols based on the Sleepy Model have made notable strides in accommodating fluctuating network participation, demonstrating practical resilience and efficient latency management under dynamic conditions \cite{goyal2021instant}\cite{d2022goldfish}\cite{momose2022constant}\cite{malkhi2023towards}\cite{d2023streamlining}\cite{d2024recent}\cite{wang2024sleepy}. However, existing sleepy BFT models face a fundamental challenge in balancing latency and security. This latency-security trade-off manifests itself primarily in two critical aspects:

\textbf{1. Multiple Round Overhead:} Many existing protocols require multiple rounds of communication to reach consensus, significantly increasing overall latency. For instance, the work of \cite{momose2022constant} requires 16$\Delta$ for consensus. Although some models introduced early decision mechanisms \cite{malkhi2023towards}\cite{d2023streamlining}, achieving 4$\Delta$ latency in the best-case scenario, the average latency remains considerably high. 
    
\textbf{2. Limited Resilience to Adversarial Behavior:} As participation fluctuates, maintaining robust security against adaptive adversaries becomes increasingly challenging. In dynamic participation scenarios, sophisticated adversarial strategies can exploit temporary imbalances in the network \cite{wang2024sleepy}, potentially compromising the integrity of the consensus process. These attacks underscore the difficulty of maintaining consistent security guarantees in systems where the set of active participants changes over time.

Our approach incorporates a pre-commit mechanism with PVSS into the consensus process. The protocol achieves a balanced latency-security trade-off, maintaining security guarantees against up to 1/2 malicious nodes while achieving a latency of 4$\Delta$ in typical scenarios. Our key contributions are as follows: 

\textbf{PVSS-based Message Binding Mechanism:} Our approach addresses these challenges by binding messages with node identities through PVSS and integrating a pre-commit mechanism. In our protocol, nodes commit to their future participation before each consensus round, with these commitments cryptographically bound to their messages through PVSS. This design enables secure message verification in a single voting round by requiring nodes to distribute verifiable shares that can only be reconstructed with sufficient participation from the committed set, effectively preventing selective message broadcasting. When nodes plan to go offline for maintenance or upgrades, they can signal this through the pre-commit mechanism, while any unplanned deviation from commitments becomes cryptographically evident through failed PVSS reconstruction. In typical scenarios, only one secret reconstruction operation is needed, specifically for the leader's proposed block, though additional reconstructions may be required in consensus failure cases. As demonstrated in Table~\ref{tab:tob-comparison}, this approach achieves competitive latency performance compared to state-of-the-art protocols.
    
\textbf{Enhanced Security Model for Dynamic Participation Environments:} We provide a rigorous mathematical model of node behavior within each $\Delta$ time interval. This model leverages PVSS properties to maintain security integrity even under sophisticated attack scenarios in dynamic settings. Additionally, the pre-commit mechanism requires that nodes must send participation commitments before joining the next consensus round, enabling dynamic threshold adjustments and enhancing the protocol's adaptability to fluctuating network conditions. We present comprehensive proofs demonstrating the protocol's resilience against a range of adversarial strategies, including adaptive corruptions and network partitioning attacks.

\begin{table*}[!t]
\centering
\caption{Comparison of dynamically sleepy models}
\label{tab:tob-comparison}
\begin{tabular}{|l|c|c|c|c|c|}
\hline
Protocol & Adv.Resilience & Best.Latency & Avg.Latency & Block Time & Voting Rounds \\
\hline
MR \cite{momose2022constant} & 1/2 & 16$\Delta$ & 24$\Delta$ & 16$\Delta$ & 10 \\
MMR2 \cite{malkhi2023towards} & 1/2 & 4$\Delta$ & 9$\Delta$ & 10$\Delta$ & 9 \\
GL \cite{gafni2023brief} & 1/2 & 6$\Delta$ & 11$\Delta$ & 10$\Delta$ & 9 \\
DZ \cite{d2023streamlining} & 1/2 & 6$\Delta$ & 8$\Delta$ & 4$\Delta$ & 1 \\
1/3MMR \cite{malkhi2022byzantine} & 1/3 & 3$\Delta$ & 4$\Delta$ & 2$\Delta$ & 2 \\
1/4MMR \cite{malkhi2022byzantine} & 1/4 & 2$\Delta$ & 3$\Delta$ & 2$\Delta$ & 1 \\
DNTS \cite{d2022goldfish} & 1/2 & $O(\kappa)^*$ & $O(\kappa)^*$ & 3$\Delta$ & 1 \\
PS \cite{pass2017sleepy} & 1/2 & $O(\kappa)$ & $O(\kappa)$ & $\Delta$ & 0 \\
\textbf{PVSS-BFT} & 1/2 & 4$\Delta$ & 4$\Delta$ & 4$\Delta$ & 1 \\
\hline
\end{tabular}
\par
\smallskip
\raggedright Note: We compare several sleepy models in terms of their resilience and performance. The definitions of latency, block time, and voting rounds are adopted from existing literature \cite{d2023streamlining}. "Adv. Resilience" refers to Adversarial Resilience. "Avg. Latency" refers to the expected time for a transaction's confirmation under random submission times. These metrics help in evaluating the efficiency and security of each protocol.

$^*$ For Goldfish\cite{d2022goldfish}, we show the latency in conditions of low participation.
\end{table*}

Our experimental results demonstrate the effectiveness of our protocol in dynamic participation environments. In simulated attack scenarios, our protocol significantly reduces fork occurrences and maintains chain stability compared to traditional BFT protocols. Furthermore, latency tests show that PVSS-BFT achieves constant low latency across various participation levels, outperforming longest-chain based systems in the circumstances of low and high participation with moderate fluctuations. These findings underscore the robustness and efficiency of our approach to balancing security and performance in sleepy models.

The rest of this paper is structured as follows: Section~\ref{related work} provides a detailed overview of related work, Section~\ref{model} describes our system model and definitions, Section~\ref{pvss-bft} presents our consensus protocol, Section~\ref{security} provides a comprehensive security analysis, Section~\ref{experiment} presents our experimental results, and Section~\ref{conclusion} concludes the paper with a discussion.

\section{Related Work}
\label{related work}

\textbf{Sleepy Consensus Model}
Traditional BFT protocols, such as PBFT \cite{castro1999practical}, were designed with the assumption of a static set of continuously participating nodes. However, node availability can fluctuate in real-world distributed systems. The \emph{Sleepy Consensus Model}, introduced by Pass and Shi \cite{pass2017sleepy}, addresses the challenges posed by dynamic node availability. This model allows nodes to switch between online and offline, enabling consensus protocols to function effectively even when the set of participating nodes changes over time. Daian et al.\cite{daian2019snow} extended the Sleepy Model to Proof-of-Stake (PoS) systems with their Snow White protocol. Goyal et al.\cite{goyal2021instant} proposed a method for instant confirmation of transactions after their appearance in the ledger.

Building on these foundational efforts, Momose and Ren \cite{momose2022constant} further advanced the concept by introducing the notion of Dynamic Quorum. Their protocol incorporates graded agreement into the consensus process, allowing for the adjustment of quorum sizes at any given time. This innovation paved the way for subsequent research, enabling the development of Total-Order Broadcast protocols capable of accommodating inherently fluctuating node availability. Gafni and Losa \cite{gafni2023brief} utilized a Commit-Adopt primitive within the sleepy model.  Malkhi et al. \cite{malkhi2023towards} addressed the issue of potentially corrupt nodes while maintaining the integrity of the consensus. D'Amato and Zanolini \cite{d2023streamlining} optimized the Sleepy Model by integrating an extended three-grade graded agreement into the consensus process. This extension streamlined the voting process, reducing the number of rounds required to reach consensus. Wang et al. \cite{wang2024sleepy} focused on the known participation model, where the network is aware of a minimum number of awake honest replicas. These developments have improved the ability of consensus protocols to manage dynamic participation. However, \emph{multiple rounds of communication} can increase latency, while dynamically adjusting the quorum size increases the complexity of maintaining safety and liveness.

\textbf{Security Optimizations in BFT Protocols}
Though designed to withstand arbitrary faults, BFT protocols struggle to maintain security as networks scale and adversaries grow more sophisticated. Researchers have continuously sought to enhance the security of BFT systems. Algorand \cite{chen2019algorand} introduced a verifiable random function (VRF) to randomly select a user committee to run the consensus in each round. ByzCoin \cite{kogias2016enhancing} proposed the use of collective signing to reduce communication complexity and maintain safety. SBFT\cite{gueta2019sbft} further utilized threshold signatures based on the BLS signature scheme\cite{boneh2001short}.

Secret sharing \cite{blakley1979safeguarding}\cite{shamir1979share} schemes have been increasingly integrated into BFT protocols to enhance their security \cite{beimel2011secret}\cite{chandramouli2022survey}. FastBFT\cite{liu2018scalable} leverages trusted execution environments (TEE) to design a fast and scalable BFT protocol. It uses TEE for generating and sharing secrets, and introduces a tree-based communication topology to distribute load. Basu et al.\cite{basu2019efficient} introduce an efficient VSS scheme with share recovery capabilities designed for BFT integration. COBRA\cite{vassantlal2022cobra} proposes VSSR, a VSS framework that uses hardware-assisted TEE and lightweight secret sharing to efficiently aggregate messages. These advancements demonstrate the significant potential for integrating cryptographic primitives such as VRF and secret sharing into BFT protocols to improve security. However, while secret sharing schemes have been effectively employed to improve BFT protocols, their application in environments with dynamic node participation remains limited.

\section{Model and Definitions}
\label{model}

We consider a system of $n$ validators $\mathcal{N} = \{n_1, ..., n_n\}$ participating in a BFT protocol. The protocol proceeds in a series of consecutive views, each denoted by an integer $v$. In each view, one or more blocks may be proposed, but only one block can be decided \cite{d2023streamlining}.
Previous works \cite{pass2017sleepy}\cite{goyal2021instant}\cite{momose2022constant} have demonstrated that network synchrony is a necessary condition for consensus protocols in environments where nodes can dynamically participate, as asynchronous networks cannot guarantee consensus when node participation is unpredictable. An adaptive adversary exists in the system that can corrupt nodes at any point during execution. These corrupted nodes may exhibit arbitrary Byzantine behavior, deviating from the protocol specifications and potentially colluding with other malicious nodes. Nodes that remain uncorrupted throughout execution are considered honest and follow protocol specifications faithfully.

\textbf{Weakly synchronized clocks. }
In our model, all participating nodes maintain synchronized clocks within a maximum deviation of $\Delta$ from the global reference time, where $\Delta$ denotes the maximum tolerable network delay for message delivery and clock skew. At any global time $t$, each node $p$ maintains a local time $\tau_p$ that differs from $t$ by at most $\Delta$, specifically $\tau_p = t - \delta_p$ where $0 \leq \delta_p \leq \Delta$. Following the definition given by \cite{goyal2021instant}\cite{momose2022constant}\cite{malkhi2023towards}, without loss of generality, we will consider nodes to have synchronized clocks. 

\textbf{Communication channels.}
We assume authenticated channels for message transmission. Byzantine nodes cannot modify messages or prevent message delivery between honest nodes. Consistent with the definition provided by \cite{momose2022constant}, we assume that if an honest node $p$ is awake at global time $t$, then $p$ has received all messages sent by honest nodes by global time $t - \Delta$.

\textbf{The sleepy model. }
Our protocol operates in the sleepy model introduced by Pass and Shi \cite{pass2017sleepy}. Each validator $n_i \in \mathcal{N}$ dynamically transitions between active and inactive states over time. The awake nodes fully participate in protocol execution, while the asleep nodes suspend all protocol activities. Let $\mathcal{A}(t) \subseteq \mathcal{N}$ denote the set of active validators at global time $t$, with $|\mathcal{A}(t)| = n_t$ where $0 < n_t \leq n$. In line with \cite{momose2022constant}\cite{malkhi2023towards}\cite{d2023streamlining}, we require that at any time $t$, the number of Byzantine validators in $\mathcal{A}(t)$ is bounded by $f_t$, where $f_t < n_t/2$, ensuring an honest majority among active participants.
The protocol accommodates state transitions of honest validators under adversarial control without requiring advance notification, though our pre-commit mechanism (detailed in Section~\ref{pvss-bft}) provides a framework for planned transitions.

\textbf{Atomic broadcast. }
As defined in \cite{momose2022constant}\cite{malkhi2023towards}, atomic broadcast enables participants to reach consensus on an expanding sequence of values $[B_0, B_1, B_2, \dots]$, which is commonly referred to as a log. 
Our protocol implements a variant of atomic broadcast tailored to the PVSS-BFT mechanism. 
In our model, clients broadcast transactions to all validators, and nodes independently propose blocks containing these transactions. In each view $v$, a leader is then elected based on VRF values.
This order of operations influences the liveness property. We have adapted the liveness definition inspired by \cite{wang2024sleepy} to suit the specific mechanisms of our protocol. The protocol provides the following guarantees: 

For safety: if two honest nodes decide logs $[B_0, B_1, \dots, B_j]$ and $[B'_0, B'_1, \dots, B'_{j'}]$, then $B_i = B'_i$ for all $i \leq \min(j, j')$.

For liveness: if a transaction \(\text{tx}\) is broadcast to all honest validators at time \( t \), then there exists a time \( t' \geq t \) such that all honest validators awake at time \( t' \) will eventually decide on a log containing \(\text{tx}\).

\textbf{Block. }
Blocks represent batches of transactions. Each block $B$ contains a reference to its parent block (except the genesis block $B_0$), a set of valid transactions, and cryptographic proof of validity. Two blocks conflict if one of them extends the other at the same height.  A chain is a growing sequence of blocks $[B_0, B_1, \dots, B_j]$.

\textbf{Fork. }  A fork refers to a situation in which two or more conflicting blocks (i.e., blocks at the same height that do not extend each other) are simultaneously decided or irreversibly committed by different subsets of honest validators.

\subsection{Cryptographic Primitives}
\label{cryptographic-primitives}
We assume a public-key infrastructure (PKI) with digital signatures. A message $\mu$ signed by $p_i$ is noted as $\langle\mu\rangle_i$.$H(\cdot)$ represents a collision-resistant hash function.

\textbf{Verifiable Random Functions}
A VRF allows a participant $p_i$ to generate a verifiable pseudo-random value $\rho_i$ and a proof $\pi_i$ from an input $\mu$: $(\rho_i, \pi_i) \leftarrow \text{VRF}_i(\mu)$. Anyone can verify the correctness of $\rho_i$ using $\pi_i$ and $p_i$'s public key $pk_i$, i.e. $VRF.VERIFY(pk_i, \mu, \rho_i, \pi_i)$. 

\textbf{Secret Sharing and Verifiable Secret Sharing (VSS)}
Secret sharing introduced independently by Shamir\cite{shamir1979share} and Blakley\cite{blakley1979safeguarding} allow a secret $s$ to be splitted into share distributed among multiple parties. In a $(t,n)$-threshold scheme, any $t$ out of $n$ shares can reconstruct $s$, while fewer than $t$ shares reveal nothing about $s$. VSS schemes, introduced by Chor et al.\cite{choc1985verifiable}, allow honest participants detect inconsistency if the dealer tries to distribute inconsistent shares.

\textbf{Publicly Verifiable Secret Sharing (PVSS)}
PVSS\cite{schoenmakers1999simple}\cite{stadler1996publicly} further allows anyone to verify share correctness without compromising secrecy. In our protocol, each node that proposes a block during view $v$ plays the role of “dealer”. Let $G_q$ denote a group of prime order $q$, and $g, G \in G_q$ be independently selected generators. According to \cite{schoenmakers1999simple}, the scheme operates as follows:

Initialization: Each participant $P_i$ generates a private key $x_i \in_R \mathbb{Z}_q^*$ and registers a public key $y_i = G^{x_i}$.

Share Distribution: The dealer selects a random polynomial $p(x) = \sum_{j=0}^{t-1} \alpha_j x^j$ of degree at most $t-1$, where $\alpha_j \in \mathbb{Z}_q$ and $s = \alpha_0$. The dealer publishes commitments $C_j = g^{\alpha_j}$ for $0 \leq j < t$; encrypted shares $Y_i = y_i^{p(i)}$ for $1 \leq i \leq n$ using $PVSS.SPLIT(s, n, t)$. Anyone can check that share $Y_i$ is consistent with $C_i$ by calling $PVSS.VERIFY(pk_i, C_i, Y_i)$.
    
Share Reconstruction: Any $t$ participants can reconstruct the secret by decrypting shares: $S_i = Y_i^{1/x_i} = G^{p(i)}$ and combining the shares using Lagrange interpolation: $S = \prod_{i=1}^t S_i^{\lambda_i}$, where $\lambda_i$ are the Lagrange coefficients. This process can be abbreviated as $PVSS.RECONSTRUCT(\{S_i\}_{i=1}^t)$.

The scheme guarantees, for any $1\!\le\! t\!\le\! n$:

\emph{Correctness.} If the dealer follows the protocol, $PVSS.RECONSTRUCT$ on any $t$ honest shares always outputs the original $s$.

\emph{$t$‑Privacy.} Fewer than $t$ colluding parties obtain no information about $s$

\emph{Public Verifiability.} Anyone, given $(pk_i, C_i, Y_i)$, can run $PVSS.VERIFY$ without access to secret keys. Valid shares always pass, and forging an invalid share that passes is infeasible.

\emph{$t$‑Robustness.} Once $C$ is fixed, at most one secret $s$ can be reconstructed from any set of $t$ verified shares. A malicious dealer cannot make two inconsistent secrets pass verification.

\section{PVSS-BFT Protocol Overview}
\label{pvss-bft}

Our PVSS-BFT protocol is presented in Algorithm~\ref{alg:pvss-bft}. Figure~\ref{fig:protocol_flow} shows an overview of the flow of our protocol.

\subsection{Phase 1: Block Proposal and Share Distribution}

\begin{figure*}[!t]
    \centering
    \includegraphics[width=\textwidth]{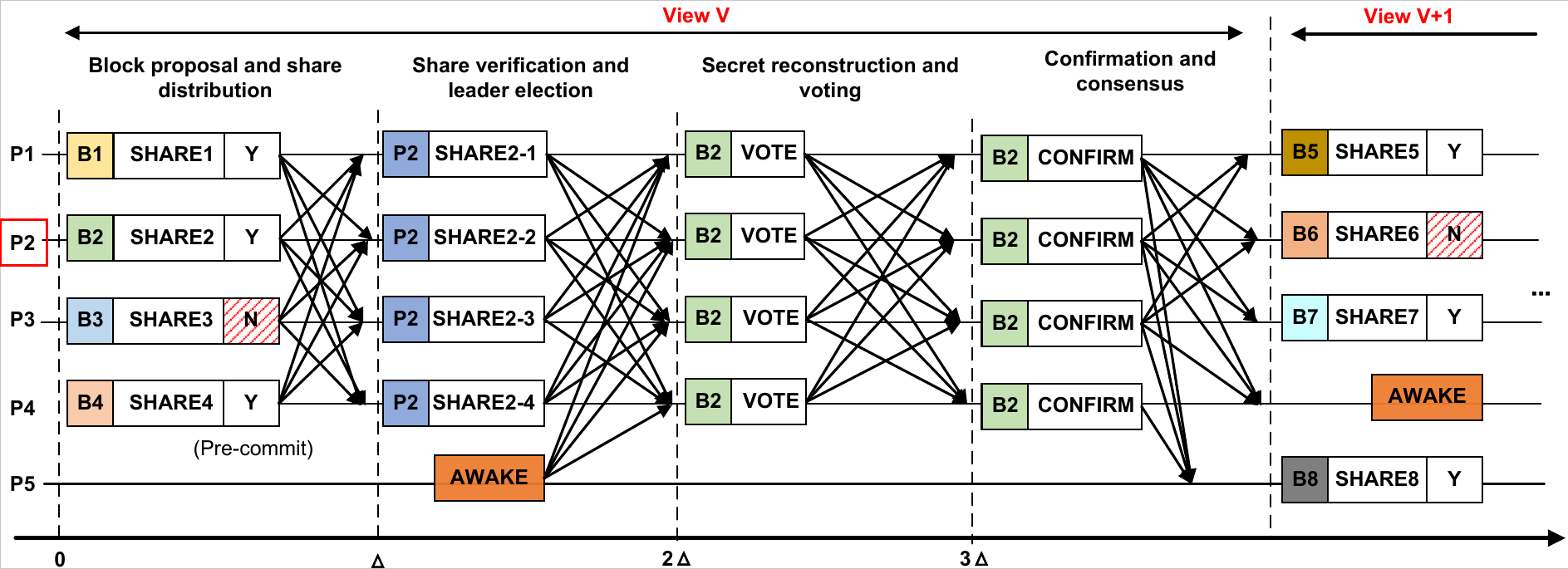}
     \caption{The PVSS-BFT protocol flow across $v$ and $v+1$. In the first phase, each node proposes a candidate block, splits the block’s hash into multiple shares, and distributes shares along with a pre-commit signal indicating whether it intends to join the next consensus round. Nodes $P1$, $P2$, and $P4$ choose to pre-commit $(Y)$, while P3 opts out $(N)$. In the second phase, nodes determine a leader (e.g. $P2$) based on VRF values. Then every node that previously received $P2$’s share now broadcasts that share to all other nodes. During phase 3, nodes use the collected shares to reconstruct the $P2$’s block and cast votes on its validity. In the final phase, upon gathering a sufficient number of valid votes, nodes broadcast confirm messages, finalizing consensus on the leader’s proposed block. As a result, by the start of View $V+1$, node P3, which did not pre-commit, is excluded, while node P5, having issued an awake message, successfully joins the protocol. }
    \label{fig:protocol_flow}
\end{figure*}

In each view $v$, we define $\mathcal{A}(v)$ from the participant set $N$ as the number of active participants recognized for $v$. $\mathcal{A}(v)$ consists of nodes that pre-committed their participation and newly awakened nodes that were confirmed in view $v-1$. Every node in $\mathcal{A}(v)$ may independently proposes a block in Phase 1 before the leader election to ensure protocol progress regardless of eventual leader availability. 

\textbf{Block Proposal and Pre-Commit. }Each node $i$ generates a block $B_i$. It also creates a pre-commit signal $PRECOM_i$ indicating its intention to participate in view $v+1$. To bind the block and pre-commit together, node $i$ computes hash $h_i = H(B_i \| PRECOM_i )$.

\textbf{PVSS Splitting. } Node $i$ now acts as a dealer for its own block. Using a PVSS scheme $\textrm{PVSS.SPLIT}(h_i, \mathcal{A}(v), t)$, ($t$ is the threshold required to reconstruct the secret $h_i$, we let $ t = \lfloor \mathcal{A}(v) / 2 \rfloor + 1$), node $i$ produces: \textbf{A public commitment} $C_i$ ensuring the validity of shares; \textbf{Encrypted shares} $Y_i = (Y_{i,1}, \ldots, Y_{i,\mathcal{A}(v)})$, where each $Y_{i,j}$ is the share sent to node $j$ from $i$.

\begin{algorithm}[t]
\caption{PVSS-BFT Protocol}
\label{alg:pvss-bft}
\begin{algorithmic}[1]
\STATE \textbf{Initialization:} Set $outputs \leftarrow \emptyset$. Initialize $V \leftarrow \emptyset$, $\textsc{NextRoundCommit} \leftarrow \emptyset$ 
\STATE \textbf{Phase 1: Block Proposal and Share Distribution}
\STATE Generate block $B_i$ and pre-commit signal $PRECOM_i$
\STATE \textbf{If} receive $\langle AWAKE \rangle$ message:
\STATE \quad Add to $\textsc{AwakeList}_i$
\STATE Compute $h'_i \leftarrow \textsc{Hash}(B_i\| PRECOM_i)$
\STATE Generate shares: $(Y_{i,1}, ..., Y_{i,n}), C_i \leftarrow \textsc{PVSS.SPLIT}(h'_i, \mathcal{A}(v), t)$
\STATE Generate VRF output: $(\rho_i, \pi_i) \leftarrow \textsc{VRF}_i(v)$
\STATE Broadcast $PROPOSE$ $\langle v, B_i, Y_{i}, C_i, \rho_i, \pi_i, PRECOM_i \rangle_i$
\STATE \textbf{Phase 2: Share Verification and Leader Election}
\STATE For each received $PROPOSE$ message:
\STATE \quad Verify VRF: $valid_{vrf} \leftarrow \textsc{VRF.VERIFY}(pk_j, \pi_j, v)$
\STATE \quad Verify share: $Y_j \leftarrow \textsc{PVSS.VERIFY}(pk_j, C_j, Y_j)$
\STATE \quad \textbf{If} $valid_{vrf}$ \textbf{and} $Y_j$: Add $j$ to valid proposal set $V$
\STATE Select leader: $L \leftarrow \arg\max_{j \in V} \rho_j$
\STATE Record $PRECOM_j$ in $\textsc{NextRoundCommit}$
\STATE Update $\textsc{AwakeList}_i$ 
\STATE Prepare $M_1 \leftarrow \langle  v, Y_{L,i} \rangle$
\STATE Prepare $M_2 \leftarrow \langle \textsc{AwakeList}_i, \textsc{NextRoundCommit}_i \rangle$
\STATE Concatenate and broadcast: $M_1 \| M_2$
\STATE \textbf{Phase 3: Secret Reconstruction and Voting}
\STATE Add nodes to $\textsc{NextRoundParticipants}$ if they appear in $> \mathcal{A}(v)/2$ $\textsc{AwakeList}$ and add nodes to $\textsc{VoteforNextRoundCommitment}$ if appearing in $> \mathcal{A}(v)/2$ $\textsc{NextRoundCommit}$s
\STATE Reconstruct secret: $secret \leftarrow \textsc{PVSS.Reconstruct}(\{Y_{L,m}\} _{m=1}^t)$
\STATE \textbf{If} $secret = \textsc{Hash}(B_l\| \textsc{precom}_l)$: $\textsc{vote}_i \leftarrow \text{true}$
\STATE Generate shares of vote: $(Y_{\textsc{vote}_i}, C_{\textsc{vote}_i}) \leftarrow \textsc{PVSS.SPLIT}(\textsc{Hash}(\textsc{vote}_i), \mathcal{A}(v), t)$
\STATE Broadcast $VOTE$ $\langle v, \textsc{vote}_i, Y_{\textsc{vote}_i}, C_{\textsc{vote}_i} \rangle_i$
\STATE \textbf{Phase 4: Confirmation and Consensus}
\STATE Upon receiving $\geq \mathcal{A}(v)/2$ valid $VOT$E messages:
\STATE \quad Broadcast $CONFIRM$ $\langle v, \textsc{Hash}(B_l\| \textsc{precom}_l) \rangle_i$
\STATE Upon collecting $\geq \mathcal{A}(v)/2$ valid CONFIRM messages:
\STATE \quad Decide on $B_l$
\end{algorithmic}
\end{algorithm}

\textbf{VRF Generation. } Node $i$ also generates a VRF output $(\rho_i, \pi_i) \leftarrow \textrm{VRF}_i(v)$, used to elect a leader in Phase 2.

Node $i$ broadcasts a $PROPOSE$ message $\langle v, B_i, Y_i, \rho_i, \pi_i, PRECOM_i \rangle_i$ to all other nodes. By including $PRECOM_i$ within $h_i$, nodes $i$ signals possible maintenance schedules. This pre-commit mechanism ensures that a reliable set of the next-round participant for PVSS is defined. Since $C_i$ is publicly recorded, $i$ cannot selectively broadcast $Y_i$ without being detected. Integrating block proposals with PVSS before the leader election ensures security by preventing equivocating or selective message broadcasting, as any inconsistency can be caught through share reconstruction and verification.

\subsection{Phase 2: Share Verification and Leader Election}
Upon receiving $PROPOSE$ messages, node $i$ verifies the validity of shares and elects a leader based on VRF values.

\textbf{Share Verification.} For each $PROPOSE$ message from node $j$, node $i$ extracts share $Y_{j,i}$ and runs $\textrm{PVSS.VERIFY}(pk_j, C_j, Y_{j,i})$ to ensure $Y_{j,i}$ is consistent with $C_j$. Node $i$ also checks $\textrm{VRF.VERIFY}(pk_j, \pi_j, v)$. If both verification succeed, node $i$ deems $j$'s message valid and include $j$ in a set $V$.

\textbf{Leader Election.} Node $i$ selects a leader $L \leftarrow \arg\max_{j\in V} \textrm{vrf}_j$. All pre-commit signals $PRECOM$ are recorded in $\text{NextRoundCommit}_i$ list to track participation intentions for round $v+1$.

If node $i$ receives an $AWAKE$ message from a newly active node $j$ during phase 1 or 2, it updates $\text{AwakeList}_i$ to include $j$. To ensure all nodes can reconstruct $L$'s block, node $i$ broadcasts $Y_{L,i}$ (i.e. the share $i$ received from the potential leader $L$), via $\langle v, Y_{\text{L},i}, \text{AwakeList}_i, \text{NextRoundCommit}_i \rangle_i$.

The early distribution of blocks and shares before leader election ensures that even if the leader becomes unavailable, other nodes can reconstruct and verify the leader's block using share reconstruction. This design prevents malicious leaders from distributing inconsistent blocks or manipulating the protocol, as block verification proceeds independently of leader presence. Since the necessary data is pre-distributed, leader unavailability after the election is inconsequential.

\subsection{Phase 3: Secret Reconstruction and Voting}
Once the leader $L$ is elected, nodes reconstruct and verify $L$'s block. This process confirms both block integrity and participant adherence to pre-commit signals.

\textbf{Secret Reconstruction.} Node $i$ collects at least $t$ distinct shares $\{Y_{L,m}\} _{m=1}^t$. Then node $i$ decrypts every collected share
$S_{L,m}=Y_{L,m}^{1/x_m}$ and reconstructs the secret using  $\textsc{PVSS.Reconstruct} ( \{S_{L,m}\} _{m=1}^t)$ and verifies it against $L$'s block hash.  If $\textrm{secret} = \textrm{Hash}(B_l \| PRECOM_l)$, $i$ sets $\textrm{vote}_i \leftarrow \textrm{true}$; otherwise $\textrm{vote}_i \leftarrow \textrm{false}$. The successful reconstruction of the leader's block not only verifies block integrity but also confirms the adherence of participants to their pre-commits, as reconstruction requires sufficient honest nodes to contribute shares.

\textbf{Voting.} Each awake node $i$ encode its vote $(Y_{i,1}^{vote}, \ldots, Y_{i,\mathcal{A}(v)}^{vote}), C_i^{vote} $ using $ \textrm{PVSS.SPLIT}(H(vote_i), \mathcal{A}(v), t) $.  

Node $i$ broadcasts a $VOTE$ message $<v, vote_i, (Y_{i,1}^{vote}, \ldots, Y_{i,\mathcal{A}(v)}^{vote}), C_i^{vote}>$.  

The protocol tracks node appearances in $\text{NextRoundCommitment}$ lists - if a node $i$'s count exceeds threshold $\mathcal{A}(v)/2$, $k$ is added to \text{VoteforNextRoundCommitment}. Similarly, if the count for awake node \( k \) exceeds the threshold, node $i$ preliminarily confirma \( k \) as awake for next round. The integration of PVSS with voting ensures that block verification is tied to pre-commit validation, as failed reconstruction indicates either malicious behavior or violation of pre-commits. 

\subsection{Phase 4: Confirmation and Consensus}

\textbf{Commit Rule.} When node $i$ receives $\geq \mathcal{A}(v)/2$ valid $VOTE$ messages for block $B_{L}$, it broadcasts a $CONFIRM$ message $\langle v, H(B_l) \rangle_i$. Upon collecting at least $\mathcal{A}(v)/2$ valid $CONFIRM$ messages, $i$ decides on $B_l$ and finalizes the participation list for round $v+1$, incorporating nodes with validated pre-commit signals $PRECOM$ and newly confirmed awake nodes.

\textbf{Error Handling.} 
When PVSS reconstruction fails or inconsistent votes are detected, the protocol initiates error correction by having all nodes broadcast their stored shares, enabling vote reconstruction to verify share-vote consistency and identify nodes whose reconstructed votes differ from their broadcast votes. This deterministic error detection approach provides cryptographic proof of both malicious behavior and commitment violations. Regarding the handling of unexpected node failures that may trigger a view change, for example, if a node receives fewer than a quorum of valid $VOTE$ or $CONFIRM$ messages, the view may be aborted. We refer readers to Section~\ref{experiment2} and \ref{Unstable Participation Analysis}, where the protocol's tolerance to node failures in each phase is rigorously analyzed.

\section{Security Analysis}
\label{security}

\subsection{Safety Analysis}
\begin{lemma}
\label{lemma1}
For any honest validators $p, q \in \mathcal{A}(v)$, if $p$ reconstructs a block $B$ from the shares of leader $L$ at time $t = 2\Delta$, then $q$ either reconstructs the same block $B$ or fails to reconstruct any block at time $t = 2\Delta$.
\end{lemma}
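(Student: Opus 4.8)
The plan is to derive the dichotomy from three facts proved or assumed earlier: the synchrony bound on message delivery, the $t$-Robustness of the PVSS scheme, and collision resistance of $H$. Intuitively, by time $2\Delta$ every honest awake validator has a common view of leader $L$'s Phase-1 broadcast and of the Phase-2 rebroadcasts, so any two honest validators that do reconstruct are reconstructing from verified shares tied to the same fixed commitment, and $t$-Robustness leaves them no freedom: at most one secret, hence (by collision resistance) at most one block, can come out.

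First I would pin down the timeline. At the start of view $v$ (set global time $0$) every honest awake validator broadcasts its $PROPOSE$ message; by the communication assumption these reach all honest awake validators by time $\Delta$, so by $\Delta$ both $p$ and $q$ hold $L$'s message, including the public commitment $C_L$ and the candidate block $B_L$. At time $\Delta$ each honest awake node rebroadcasts the single share $Y_{L,i}$ it received from $L$; these arrive by $2\Delta$. Since honest validators are a majority of $\mathcal{A}(v)$, there are at least $\lfloor \mathcal{A}(v)/2 \rfloor + 1 = t$ of them, so by $2\Delta$ each of $p$ and $q$ possesses at least $t$ shares for $L$, each passing $\textsc{PVSS.VERIFY}(\cdot, C_L, \cdot)$ by Public Verifiability.

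Next I would argue uniqueness. Suppose $p$ reconstructs $B$ at $2\Delta$: then $p$ used $t$ shares verified against $C_L$ to obtain a secret $s$ with $s = \textsc{Hash}(B \| PRECOM_L)$. Suppose $q$ also reconstructs some block $B'$ at $2\Delta$, from $t$ shares verified against the same commitment $C_L$, obtaining $s' = \textsc{Hash}(B' \| PRECOM_L)$. By $t$-Robustness, any two sets of $t$ verified shares consistent with a fixed commitment reconstruct the same value, so $s = s'$, and collision resistance of $\textsc{Hash}$ then forces $B = B'$. Thus $q$ cannot reconstruct a block different from $B$; the only remaining possibility is that $q$ has not assembled $t$ shares verifying against $C_L$ (or has detected an inconsistency and entered error handling), i.e. $q$ fails to reconstruct — exactly the alternative in the statement.

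The hard part will be ruling out a Byzantine leader $L$ that equivocates, handing $p$ a commitment $C_L$ and $q$ a different $C'_L$ inside separately signed $PROPOSE$ messages, so that $t$-Robustness does not directly relate $p$'s and $q$'s reconstructions. I would close this gap by observing that an honest node $j$ rebroadcasts in Phase 2 only the one share it actually received from $L$, and a given share $Y_{L,j}$ can pass $\textsc{PVSS.VERIFY}$ against at most the commitment it was generated for; hence the $\ge t$ honest rebroadcasts seen by $q$ cannot simultaneously be consistent with two distinct commitments, so $q$ collects $< t$ verified shares for any commitment other than the one held by the honest majority, and therefore either reconstructs the same $B$ as $p$ or fails (falling into the error-handling path). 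A minor side point I would also check is that $p$ and $q$ are in fact talking about the same leader $L$: since all honest proposals (with their VRF outputs) are delivered by $\Delta$ and each $\rho_j$ for view $v$ is unique and publicly verifiable, ${\arg\max}_{j} \rho_j$ over the verified proposals is computed identically by $p$ and $q$, which is the premise under which the lemma is stated.
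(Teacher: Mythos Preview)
Your argument is correct and rests on the same ingredients as the paper's proof---synchrony for common reception, public verifiability to reject forged shares, and $t$-robustness to pin the reconstructed secret---but the decomposition is different. The paper proceeds by an explicit case analysis over three adversarial strategies: (i) forging a share while leaving the commitment $C$ fixed, (ii) forging $C$ and shares simultaneously, and (iii) a non-leader broadcasting a fake leader share in Phase~2; each case is dispatched via the PVSS verification equation $\log_g X_n = \log_{y_n} Y_n$ and the uniqueness of $p(x)$ once $C$ is published. You instead give a direct forward argument: fix the timeline, observe that any two honest reconstructions are against the same $C_L$, invoke $t$-robustness, and then use collision resistance of $H$ to lift ``same secret'' to ``same block''---a step the paper leaves implicit. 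Your treatment also makes explicit two points the paper only gestures at: why $p$ and $q$ agree on $L$ (common VRF view by $\Delta$), and why a Byzantine $L$ equivocating on $C_L$ cannot yield two successful reconstructions (honest Phase-2 rebroadcasts cannot simultaneously supply $t$ verified shares for two distinct commitments). The paper's scenario-based structure maps more directly onto concrete attacks a reader might worry about; your property-based structure is shorter and arguably easier to check, since each step names exactly which PVSS guarantee it consumes.
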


\begin{proof}
We consider three possible attack scenarios.

\textbf{Scenario 1} (Share forgery without modifying commits): By the \emph{public‑verifiability} property, a tuple \((pk_i,C,Y)\) passes verification if \(Y=y_i^{p(i)}\) for the unique polynomial~\(p(x)\) defined by $C$. Hence an adversary who keeps $C$ unchanged but broadcasts $Y'_{k}\neq y_k^{p(k)}$ to some receiver $k$ will inevitably fail verification.

When introducing PVSS into BFT, at the first consensus phase, every node as a dealer selects a random polynomial $p(x) = \sum_{j=0}^{t-1} \alpha_j x^j$ of degree at most $t-1$, where $\alpha_j \in \mathbb{Z}_q$ and $s = \alpha_0$. Node $i$ has a private key $x_i \in_R \mathbb{Z}_q^*$ and a public key $y_i = G^{x_i}$, where $G_q$ is a cyclic group of prime order $q$. 

Node $i$ publishes commitment $C_i = \left\{C_0,\ldots,C_{t-1}\right\} $ where $C_j=g^{\alpha_j}$.  Once $C_i$ is published, the polynomial $p(x)$ is determined. Node $i$’s correct encrypted share is \(Y_i=y_i^{p(i)}\). Every receiver $n$ computes and verifies 
\begin{align}
X_n=\prod_{j=0}^{t-1}C_j^{\,n^j}=g^{p(n)}
\end{align}

\begin{align}
\log_g X_n=\log_{y_n}Y_n
\end{align}

Suppose an adversary $M$ broadcasts an alternative
\(Y'_k\neq Y_k\) for a certain node $k$ that still passes verification to affect $k$'s reconstruction process. Equation (1) fixes \(X_k=g^{p(k)}\). For \(Y'_k\) to pass the verification, there must exist a value \(p'(k)\) such that $Y'_k = y_k^{\,p'(k)}$ and $g^{p(k)} = g^{p'(k)}$. Because $g$ has order $q$, the latter equality implies  
\(p'(k)\equiv p(k)\pmod q\).
Consequently
\(Y'_k = y_k^{\,p'(k)} = y_k^{\,p(k)} = Y_k\),
contradicting the assumption \(Y'_k\neq Y_k\).
Hence, \emph{no} adversary can produce a different share that still
verifies against the fixed commitments.

\textbf{Scenario 2} (Simultaneous forgery of commits and shares):
By $t$‑robustness, at most one secret can be reconstructed
from $C'$.  Because honest nodes accept a share only when \((C,Y_i)\) satisfies equation (2), they will never mix shares coming from two distinct commitment vectors.  Therefore, honest nodes either all stay with $C$ or all reject the forged commitment.

Specifically, if an adversary $M$ changes any commitment \(C_j\to C'_j\), she defines a new polynomial \(p'(x)\). $M$ needs to find $p'(i)$ for each $i$ such that: $g^{p'(i)} = X'_i \quad$ and $\quad y_i^{p'(i)} = Y'_i$. The challenge for $M$ is to create consistent values $X'_i$ and $Y'_i$ that satisfy the proof for each $i$, while also ensuring that these values correspond to a valid polynomial $p'(x)$ of degree $t-1$. If $M$ attempts to create $X'_i$ and $Y'_i$ that don't correspond to a valid $p'(x)$, this inconsistency will be detectable when nodes attempt to reconstruct the secret or verify the polynomial's properties.

\textbf{Scenario 3} (Leader share forgery):
Based on the synchronous network assumption, all honest validators select the same leader. Suppose a malicious node $M$ attempts to broadcast an incorrect leader share $Y'_i \neq Y_i$ during the second phase. For each node $i$, the correct share should be: $Y_i = y_i^{p(i)} = (G^{x_i})^{p(i)} = G^{x_i \cdot p(i)}$. All nodes can use the publicly published commits to compute: $Xi = \prod{j=0}^{t-1} C_j^{i^j} = g^{p(i)} $

When any node receives a share claimed to be from the leader $Y_i$, it compute $X_i = \prod_{j=0}^{t-1} C_j^{i^j}$ and Verifies $\log_g(X_i) = \log_{y_i}(Y_i)$. If $M$ broadcasts an incorrect share $Y'_i \neq Y_i$, then there must exist $p'(i) \neq p(i)$ such that: $Y'_i = y_i^{p'(i)} = G^{x_i \cdot p'(i)}$. This is impossible by the same argument as in scenario 2; hence leader share forgery is rejected immediately.

Now, consider honest validators $p, q \in \mathcal{A}(v)$ at time $t = 2\Delta$. If $p$ successfully reconstruct block $B$. $p$ must receive at least $t > n/2$ valid shares corresponding to $B$. Each share must be consistently broadcast and verified. If $q$ receives sufficient valid shares. By Scenario 1, any share from an honest validator corresponds to the same polynomial $p(x)$. By Scenario 2, malicious validators cannot forge a consistent set of shares. By Scenario 3, all honest nodes will receive the correct share. Therefore, $q$ must reconstruct the same block $B$. If $q$ receives insufficient valid shares, $q$ will fail to reconstruct any block. 
\end{proof}

\begin{lemma}
\label{lemma2}
For any honest validators $p, q \in \mathcal{A}(v)$ of view $v$, if $p$ votes for block $B$ and $q$ votes for block $B'$ at time $t = 3\Delta$, then $B = B'$.
\end{lemma}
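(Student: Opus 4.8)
The plan is to build directly on Lemma~\ref{lemma1}. The key observation is that an honest validator $p$ sets $\textrm{vote}_p \leftarrow \textrm{true}$ only after successfully reconstructing the leader's secret and checking it equals $\textrm{Hash}(B_l \| PRECOM_l)$; in that case the block $p$ votes \emph{for} is precisely the reconstructed block. So the statement reduces to showing two honest validators cannot reconstruct two different blocks from leader $L$'s shares at time $t = 3\Delta$.

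First I would invoke the synchronous network assumption to argue that all honest validators in $\mathcal{A}(v)$ observe the same valid proposal set $V$ and hence elect the same leader $L$ (this is already used in Scenario~3 of Lemma~\ref{lemma1}). Next, I would apply Lemma~\ref{lemma1}: by time $t = 2\Delta$ any honest validator either reconstructs a unique block $B$ from $L$'s shares or fails to reconstruct. The one gap is that Lemma~\ref{lemma1} is stated at $t = 2\Delta$ while voting happens at $t = 3\Delta$; I would note that the additional round only gives honest validators \emph{more} shares (all shares from honest nodes are delivered within $\Delta$), and by the $t$-robustness property of PVSS, once the commitment vector $C_L$ is fixed, every set of $t$ verified shares reconstructs the same secret — so waiting longer cannot change the reconstructed value, only change a failure into a success. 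Therefore if $p$ reconstructs $B$ and $q$ reconstructs $B'$, both equal the unique secret determined by $C_L$, giving $\textrm{Hash}(B \| PRECOM_L) = \textrm{Hash}(B' \| PRECOM_L)$, and by collision resistance of $H$, $B = B'$.

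I would then handle the remaining case: one of the validators, say $q$, votes for $B'$ with $\textrm{vote}_q = \textrm{false}$ (i.e. $q$ failed to reconstruct or reconstructed something inconsistent with the proposed hash). Strictly, an honest $q$ that fails reconstruction sets $\textrm{vote}_q \leftarrow \textrm{false}$ and does not endorse any concrete block, so it does not "vote for $B'$" in the sense relevant to safety; I would state this carefully, perhaps by defining "votes for block $B$" to mean casting $\textrm{vote} = \textrm{true}$ bound to $B$. With that reading, the only validators that vote for a concrete block are those who reconstructed it, and the argument above closes the proof.

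The main obstacle I anticipate is the mismatch between the time index in Lemma~\ref{lemma1} ($2\Delta$) and the voting time ($3\Delta$), together with making the notion of "voting for a block" precise enough that a $\textrm{false}$ vote does not count as endorsing a conflicting block. Both are handled by the monotonicity-of-reconstruction argument (extra shares never change a reconstructed secret, by $t$-robustness) and by a careful definition, rather than any new cryptographic content; the cryptographic heavy lifting is entirely inherited from Lemma~\ref{lemma1} and the PVSS guarantees.
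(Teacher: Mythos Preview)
Your proposal is correct and follows essentially the same route as the paper: use synchrony to argue all honest validators select the same leader $L$, then invoke Lemma~\ref{lemma1} to conclude they reconstruct the same block $B_L$ and hence vote identically. You are in fact more careful than the paper on two points it glosses over---the $2\Delta$ vs.\ $3\Delta$ time mismatch (handled via $t$-robustness/monotonicity of reconstruction) and the meaning of a $\textrm{false}$ vote---so the extra rigor is welcome, but no genuinely different idea is involved.
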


\begin{proof}
Consider the necessary conditions for honest validators to cast votes. For an honest validator $p$ to vote for block $B$ at time $3\Delta$. $p$ must have identified the correct leader $L$ at time $2\Delta$. $p$ and $q$ must have reconstructed $B$ from leader $L$'s shares. The reconstructed block hash must match $L$'s proposed block. The reconstruction requires at least $t > n/2$ valid shares from distinct validators. 

Formally, at time \( t = 2\Delta \), all honest validators have received shares and VRF outputs from other validators. Since the leader \( L \) is agreed upon by all honest validators (due to the deterministic VRF selection and synchronous network), they all attempt to reconstruct the leader's block \( B_L \) using the PVSS shares. Honest validator \( p \) reconstructs block \( B \) from the shares of \( L \) at time \( t = 2\Delta \). Similarly, honest validator \( q \) reconstructs block \( B' \) from the shares of \( L \). Both \( p \) and \( q \) require at least \( T \) valid PVSS shares to reconstruct the block, where \( T > n/2 \). By \textbf{Lemma 1}, which states that if any honest validator reconstructs a block \( B \) from the shares of leader \( L \), then all other honest validators reconstruct the same block \( B \). Therefore, since both \( p \) and \( q \) are honest and have reconstructed blocks from \( L \)'s shares, it must be that \( B = B' = B_L \). At time \( t = 3\Delta \), honest validators \( p \) and \( q \) proceed to the voting phase. They vote for the block they have reconstructed and verified, which is \( B = B' = B_L \).
\end{proof}

\begin{lemma}
\label{lemma3}
For any view $v$, if an honest validator $p \in \mathcal{A}(v)$ receives sufficient votes ($\geq n/2$) for block $B$ at time t = $4\Delta$, then no honest validator $q \in \mathcal{A}(v)$ can receive sufficient votes for any block $B' \neq B $ at time t = $4\Delta$.
\end{lemma}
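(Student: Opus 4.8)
\textbf{Proof Proposal for Lemma~\ref{lemma3}.}

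The plan is to argue by contradiction, assuming that two honest validators $p$ and $q$ each receive at least $n/2$ valid $VOTE$ messages, with $p$'s votes supporting block $B$ and $q$'s supporting a different block $B' \neq B$. The strategy is to turn this into a quorum-intersection argument in the same spirit as classical BFT safety proofs, but combined with the content restriction on honest votes established in Lemma~\ref{lemma2}. First I would recall that at time $t = 4\Delta$ the vote set for each of $p$ and $q$ has size at least $n/2$, and by the honest-majority assumption on $\mathcal{A}(v)$ (namely $f_t < n_t/2$), any such set must contain at least one vote from an honest validator — indeed, a set of $\geq n/2$ validators cannot consist entirely of the fewer-than-$n/2$ Byzantine ones. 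So there is an honest $p' \in \mathcal{A}(v)$ who voted for $B$ and an honest $q' \in \mathcal{A}(v)$ who voted for $B'$.

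Next I would invoke Lemma~\ref{lemma2}: any two honest validators that vote in view $v$ at time $3\Delta$ vote for the same block, which must be the leader's block $B_L$. Applying this to $p'$ and $q'$ gives $B = B_L = B'$, directly contradicting $B' \neq B$. This already closes the argument if we can take the votes counted by $p$ and $q$ to be votes that were genuinely cast by the named validators. The subtlety — and the step I expect to be the main obstacle — is ruling out the possibility that a Byzantine validator casts one $VOTE$ message for $B$ (counted toward $p$'s tally) and a conflicting $VOTE$ message for $B'$ (counted toward $q$'s tally), i.e. equivocation at the voting layer. Here I would lean on the PVSS encoding of the vote: a valid $VOTE$ message from node $i$ carries $(vote_i, Y_{vote_i}, C_{vote_i})$ with $C_{vote_i}$ a public commitment to $H(vote_i)$, and by the $t$-robustness and public-verifiability properties (as used in Lemmas~\ref{lemma1} and~\ref{lemma2}), once $C_{vote_i}$ is fixed at most one vote value can ever pass reconstruction. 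Combined with the error-handling procedure of Phase~4, which has all nodes broadcast their stored vote shares so that any validator whose reconstructed vote differs from its broadcast vote is cryptographically exposed, an equivocating validator's second vote is detected and discarded before it can be counted in a valid tally. Thus any vote that survives into $p$'s or $q$'s count of "$\geq n/2$ valid votes" is attributable to a distinct validator with a single well-defined vote value.

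Putting the pieces together: both tallies, being of size $\geq n/2$ out of $\mathcal{A}(v)$ with at most $f_t < n_t/2$ Byzantine nodes, necessarily include an honest vote; those honest votes are for $B$ and $B'$ respectively; Lemma~\ref{lemma2} forces them to coincide with $B_L$; hence $B = B'$, contradicting the assumption. Therefore no two honest validators can receive sufficient votes for distinct blocks at time $4\Delta$. I would also remark that the same reasoning shows the decided block in Phase~4 is unique within the view, which is the form in which this lemma will be used toward the overall safety theorem. The only place requiring care in the write-up is making the "valid vote" bookkeeping precise — specifically, stating clearly that "valid" means a $VOTE$ message that passes $PVSS.VERIFY$ and is consistent with the reconstruction carried out during error handling — so that the honest-vote-exists step and the no-equivocation step are not circular.
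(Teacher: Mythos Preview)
Your core argument is the same as the paper's: a tally of at least $n/2$ votes must contain an honest vote (since $f_t < n_t/2$), and Lemma~\ref{lemma2} forces all honest votes in view $v$ to be for the same block $B_L$, yielding $B = B'$.

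The one place where you diverge is the paragraph on equivocation, and it is both unnecessary and the weakest link in your write-up. The quorum step does \emph{not} require ruling out Byzantine double-voting: even if a Byzantine validator sends a vote for $B$ to $p$ and a vote for $B'$ to $q$, it is still a single validator in each tally, so each tally of $\geq n/2$ distinct signers still contains an honest voter, and Lemma~\ref{lemma2} finishes the job. Your attempt to pin down equivocation via ``once $C_{vote_i}$ is fixed'' is also shaky as stated, since a Byzantine sender can broadcast different commitments $C_{vote_i}$ to different recipients; the error-handling procedure you cite is triggered only on detected inconsistency, not preemptively. The paper sidesteps all of this with a one-line observation: honest validators discard votes for any block other than the (agreed-upon) leader's block, so $B' \neq B_L$ can never accumulate $\geq n/2$ valid votes in the first place. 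Dropping your PVSS/error-handling paragraph would make the proof shorter and remove the only part that needs care.
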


\begin{proof}
By Lemma 2, all honest validators who vote at time $3\Delta$ vote for the same block. Due to the leader election process, only the leader's block can be reconstructed and verified. Votes for any other block are invalid and will be discarded by honest validators. 

Honest validators send their votes to all validators. By synchronous network assumptions, votes at time $3\Delta$ will be received by all honest validators by time $4\Delta$. Even if Byzantine nodes send different votes to different honest nodes, all honest nodes will eventually receive all votes. For $p$ to receive sufficient votes ($\geq n/2$) for block $B$, Some honest validators must have voted for $B$. By Lemma 2, all honest validators who vote must vote for $B$. These honest votes will reach all honest validators by time $4\Delta$. Therefore, no block other than B can receive sufficient votes. Any honest validator $q$ must see the same voting result as $p$. 
\end{proof}

\begin{theorem}[Safety]
\label{safety}
If two honest nodes confirm blocks $B$ and $B'$ respectively, then $B$ does not conflict with $B'$.
\end{theorem}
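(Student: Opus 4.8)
The plan is to reduce the statement to Lemmas~\ref{lemma1}--\ref{lemma3} and then lift the per-view guarantee to a cross-view one by induction on the view number, using the pre-commit mechanism to control how quorums overlap as the active set changes. Recall that $B$ conflicts with $B'$ means they sit at the same height without one extending the other; so proving non-conflict amounts to showing that one of $B, B'$ is an ancestor of the other (or they are equal). First I would dispose of the \emph{same-view} case: suppose honest $p$ decides $B$ and honest $q$ decides $B'$ in the same view $v$. A node decides only after collecting $\ge \mathcal{A}(v)/2$ valid $CONFIRM$ messages; since the Byzantine validators in $\mathcal{A}(v)$ number $f_v < \mathcal{A}(v)/2$, at least one honest node issued a $CONFIRM$ for $B$, hence some honest node saw $\ge \mathcal{A}(v)/2$ valid $VOTE$s for $B$ at time $4\Delta$, and likewise for $B'$. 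Lemma~\ref{lemma3} then forces $B = B'$, and a block never conflicts with itself.

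Next, the \emph{cross-view} case: without loss of generality $B$ is confirmed by some honest node in view $v$ and $B'$ in view $v' \ge v$; by the same-view argument each view confirms at most one block, call them $\hat B_v$ and $\hat B_{v'}$. I would show by induction on $v' - v$ that $\hat B_{v'}$ extends $\hat B_v$, which is exactly non-conflict, the base case $v'=v$ being handled above. For the inductive step the key is a \emph{lock-propagation} claim: if $\hat B_v$ is confirmed in view $v$, then every block an honest node votes for in any later view $w > v$ has $\hat B_v$ as an ancestor. To prove it I would argue quorum intersection between \emph{consecutive} views: confirming $\hat B_v$ means at least $\mathcal{A}(v)/2$ validators --- hence at least one honest validator --- locked on $\hat B_v$ and pre-committed to view $v+1$; the pre-commit mechanism makes $\mathcal{A}(v+1)$ consist exactly of nodes whose $PRECOM$ was recorded (or newly confirmed awake nodes), so the honest-locked set of view $v$ and the voting quorum of view $v+1$ share an honest validator, whose validity/verification checks accept only blocks extending $\hat B_v$. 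Iterating across $v+1, v+2, \dots, v'$ propagates the lock, and then Lemmas~\ref{lemma2} and \ref{lemma3} applied inside view $v'$ complete the step.

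The main obstacle I anticipate is precisely this cross-view quorum-intersection step under \emph{dynamic} participation: unlike the static setting, $\mathcal{A}(v)$ and $\mathcal{A}(v+1)$ need not be comparable, so the textbook ``two quorums of size $>n/2$ intersect'' reasoning does not apply verbatim. I would close the gap by invoking the invariants around the pre-commit mechanism --- next-round participants are admitted only through recorded pre-commits or through $CONFIRM$-ed awake messages, each certified by a witness set of size $> \mathcal{A}(v)/2$ --- so that enough honest lock-holders from view $v$ are guaranteed members of $\mathcal{A}(v+1)$ to block any conflicting confirmation, while the honest-majority bound $f_t < n_t/2$ keeps the argument valid at every view. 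A secondary point to check is that a Byzantine leader in a later view cannot push through a conflicting block: this is ruled out by the PVSS public-verifiability and $t$-robustness already used in Lemmas~\ref{lemma1}--\ref{lemma3}, since honest voters reconstruct and verify the leader's block before voting, and by Lemma~\ref{lemma3} only the unique leader's block can ever gather the $\ge \mathcal{A}(v)/2$ votes needed for confirmation.
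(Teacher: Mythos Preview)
Your same-view argument---a decision implies $\ge\mathcal{A}(v)/2$ valid $CONFIRM$s, hence some honest node saw $\ge\mathcal{A}(v)/2$ votes for $B$, and then Lemma~\ref{lemma3} forces $B=B'$---is exactly the paper's proof, and the paper's proof \emph{stops there}: it fixes a single view $v$, assumes $p,q\in\mathcal{A}(v)$, and chains Lemmas~\ref{lemma1}--\ref{lemma3} to conclude that $p$ and $q$ cannot decide different blocks at the same height. It offers no separate treatment of decisions made in different views.

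Your cross-view induction therefore goes beyond the paper. As a strategy it is the natural one, but the key ``lock-propagation'' step does not follow from the protocol as written in Algorithm~\ref{alg:pvss-bft}. You assert that the honest validators carried from view $v$ into $\mathcal{A}(v+1)$ will, via their ``validity/verification checks'', accept only blocks extending $\hat B_v$; however, the only acceptance test in Phase~3 is $\mathrm{secret}=H(B_L\|\textsc{precom}_L)$, which checks consistency of the leader's share distribution, not ancestry with respect to any previously confirmed block. There is no lock variable in the pseudocode and no rule that an honest voter rejects a well-formed leader block with the ``wrong'' parent. So while you correctly flag dynamic-participation quorum intersection as the delicate point and plausibly route membership of $\mathcal{A}(v+1)$ through the pre-commit machinery, the step that actually transmits safety across the view boundary appeals to a locking behaviour that the protocol description does not provide. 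In short: your same-view case \emph{is} the paper's proof; your cross-view case is extra work aimed at a gap the paper itself leaves open, but in its present form it rests on a rule the algorithm does not state.
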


\begin{proof}
Suppose two honest validators $p,q \in \mathcal{A}(v)$ decide different blocks $B$ and $B'$ at the same height in view v. For block B to be decided by honest validator $p$, $p$ must have received sufficient votes ($\geq n/2$) for $B$ and these votes must be based on successful block reconstruction at time $2\Delta$. By Lemma \ref{lemma1}, at time $2\Delta$, if any honest validator reconstructs block $B$, all other honest validators either reconstruct $B$. By Lemma \ref{lemma2}, at time $3\Delta$, all honest validators who vote must vote for the same block. By Lemma \ref{lemma3}, if $p$ receives sufficient votes for $B$ at time $4\Delta$, $q$ cannot receive sufficient votes for any $B' \neq B$. Therefore, it is impossible for $p$ and $q$ to decide on different blocks of the same height. 
\end{proof}

\subsection{Liveness Analysis}
\begin{theorem}
If a transaction \(\text{tx}\) is broadcast to all honest validators at time \( t \), then there exists a time \( t' \geq t \) such that all honest validators awake at time \( t' \) will eventually decide on a log containing \(\text{tx}\).
\end{theorem}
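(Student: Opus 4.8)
The plan is to reduce liveness to three ingredients: (i) a \emph{progress lemma} stating that any view whose leader is honest and remains honest throughout it terminates with every honest validator that is awake for the whole view deciding the leader's proposed block; (ii) the observation that an honest leader includes every transaction it has heard of in its proposed block, hence includes $\text{tx}$ once $\text{tx}$ has been delivered to all honest validators; and (iii) a \emph{good-view} argument showing that, because of the VRF-based leader election and the honest-majority assumption $f_t < n_t/2$, a view with an honest leader occurs almost surely, in fact within an expected constant number of views. Combining (i)--(iii) with the Safety theorem (Theorem~\ref{safety}) yields the claim: let $t'$ be the end of the first honest-leader view after the view containing $t$; all honest validators awake throughout that view decide a log containing $\text{tx}$, and by safety every other honest validator's decided log is prefix-consistent with theirs, so every honest validator awake at $t'$ (after resynchronizing, if it was asleep for part of that view) eventually holds $\text{tx}$ in its log.

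For the progress lemma I would trace the four phases for an honest leader $L$ of a view $v$. In Phase~1 every honest node, $L$ included, broadcasts its $PROPOSE$ message, so by the synchrony assumption every honest node has received $\langle v, B_L, Y_L, C_L, \rho_L, \pi_L, PRECOM_L\rangle_L$ within $\Delta$; since $L$ is honest its shares and VRF proof are well-formed, $\textsc{PVSS.VERIFY}$ and $\textsc{VRF.VERIFY}$ succeed at every honest recipient, and $L$ is placed in each honest node's valid set $V$. Conditioning on the event (established below) that $\rho_L$ is the largest VRF value in $\mathcal{A}(v)$, all honest nodes elect $L$ and, in Phase~2, rebroadcast their shares $Y_{L,i}$. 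By time $2\Delta$ each honest node has collected the shares of all honest members of $\mathcal{A}(v)$; the pre-commit mechanism fixes $\mathcal{A}(v)$ and $f_v < n_v/2$ ensures $|\,\text{honest}\cap\mathcal{A}(v)\,| \ge \lfloor \mathcal{A}(v)/2\rfloor + 1 = t$, so there are at least $t$ verified shares and, by PVSS correctness, the reconstruction succeeds and matches $\textsc{Hash}(B_L\|PRECOM_L)$; every honest node therefore sets $\textsc{vote}_i \leftarrow \text{true}$ and broadcasts $VOTE$ in Phase~3. By time $3\Delta$ each honest node has at least $\mathcal{A}(v)/2$ valid $VOTE$ messages and broadcasts $CONFIRM$, and by time $4\Delta$ at least $\mathcal{A}(v)/2$ valid $CONFIRM$ messages, so it decides $B_L$ in Phase~4. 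Since $B_L$ contains $\text{tx}$ by ingredient (ii), every honest validator awake throughout view $v$ now has $\text{tx}$ in its decided log.

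For the good-view argument I would use that $(\rho_j,\pi_j) \leftarrow \textsc{VRF}_j(v)$ depends only on the view index $v$, so the ranking it induces on $\mathcal{A}(v)$ is pseudorandom and (computationally) independent of everything sent before Phase~1; thus the top-ranked active participant is honest with probability at least $(n_v - f_v)/n_v > 1/2$. Since the leaders of successive views are governed by independent VRF evaluations, the probability that none of the first $k$ views after $t$ has an honest leader is at most $2^{-k} \to 0$, so some view $v^\star$ later than the one containing $t$ has an honest leader almost surely. Whenever the earlier views made no progress (a view change is triggered and the index advances) or decided blocks not containing $\text{tx}$, the transaction remains in every honest node's mempool, so the honest leader of $v^\star$ still includes it; applying the progress lemma to $v^\star$ and taking $t'$ to be the end of $v^\star$ finishes the proof, with Theorem~\ref{safety} covering honest validators that slept for part of $v^\star$.

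The main obstacle I anticipate is the interaction of the \emph{adaptive} adversary with the good-view step: once the VRF outputs are revealed in Phase~1, the adversary could corrupt the would-be honest leader before Phase~2, then the next-ranked participant, and so on. I would address this by noting that each such corruption permanently spends part of the adversary's budget, which is capped by $f_t < n_t/2$ at all times, so an honest leader cannot be suppressed in every view; quantitatively, the expected number of honest participants ranked above the highest-ranked Byzantine one is $O(1)$, so with constant probability per view the adversary cannot reach all of them within the $\Delta$ window before Phase~2 (and under the standard $\Delta$-delayed-corruption model the issue disappears). A secondary subtlety is that Byzantine proposers may equivocate or withhold their $PROPOSE$ messages, leaving honest nodes with differing valid sets $V$; but an honest $L$'s proposal reaches everyone, so $L\in V$ for all honest nodes and the only extra fact needed is that no honest-visible proposal outranks $\rho_L$, which is exactly the event we condition on. One must also verify that honest nodes which signalled $PRECOM$ in view $v^\star-1$ are awake and counted in $\mathcal{A}(v^\star)$ so that every quorum threshold stays reachable using honest nodes alone --- this is where the pre-commit mechanism and $f_t < n_t/2$ are used together.
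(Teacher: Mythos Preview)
Your approach is essentially the same as the paper's: both argue that, by the VRF-based election and the honest-majority assumption, an honest leader is eventually selected, and then trace the four phases of that view to show the leader's block (containing $\text{tx}$) is decided by all awake honest validators. The paper's proof is terser---it simply invokes Lemmas~\ref{lemma1}--\ref{lemma3} rather than re-deriving the phase-by-phase argument as your progress lemma, and it does not address the adaptive-corruption subtlety or invoke Theorem~\ref{safety} to cover validators asleep during the good view---so your treatment is, if anything, more careful on points the paper leaves implicit.
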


\begin{proof}
Assume that $\text{tx}$ is broadcast to all validators at time $t$. Due to the synchronous network assumption, all honest validators receive $\text{tx}$ by time $t + \Delta$. Starting from the view $v$ that begins at or after time $t + \Delta$, all honest validators include $\text{tx}$ in their proposed blocks. The leader election is based on VRF, which is unpredictable and uniformly random among validators. Given that the set of honest validators is greater than half of the total validators, an honest validator will eventually be selected as the leader in some future view $v' \geq v$. We proceed by considering the view $v'$ in which an honest leader $L$ is selected.

At time $t_0$, all awake validators broadcast their proposed blocks and shares. Honest leader $L$ includes $\text{tx}$ in its proposed block $B$. All honest validators receive shares and VRF values from other validators by time $t_0 + \Delta$. They agree on the leader $L$ based on the highest VRF value. At time $t_0 + 2\Delta$, honest validators reconstruct $L$'s block $B$ using shares. By Lemma \ref{lemma1}, if any honest validator reconstructs $B$, then all honest validators reconstruct the same block $B$ at time $t_0 + 2\Delta$. Since $B$ includes $\text{tx}$, all honest validators now have access to a block containing $\text{tx}$. By Lemma \ref{lemma2}, all honest validators vote for the same block $B$ at time $t_0 + 3\Delta$. Since the number of honest validators $\geq$ half of the active validator, $B$ receives sufficient votes at time $t_0 + 4\Delta$. By Lemma \ref{lemma3}, no conflicting block can receive sufficient votes. All honest validators confirm and decide on block $B$.

\textbf{Conclusion:} At time $t' = t_0 + 4\Delta$, all honest validators awake at time $t'$ have decided on a log containing $\text{tx}$. Therefore, for any transaction $\text{tx}$ broadcast at time $t$, there exists a time $t' \geq t$ such that all honest validators awake at time $t'$ will eventually decide on a log containing $\text{tx}$.
\end{proof}

\section{Experimental Evaluation}
\label{experiment}

\subsection{Experiment 1: Security Analysis}
In this experiment, we evaluate the security performance of the PVSS-BFT system against a baseline BFT protocol that omits the PVSS mechanism. By removing PVSS from our protocol, we create a conventional BFT protocol with standard security features.
The network consists of 40 nodes deployed in a cloud environment using Amazon EC2 services. We implemented both the PVSS-BFT and conventional BFT protocols on these nodes. 
We evaluate the resilience of our PVSS-BFT protocol against adversarial behaviors specifically designed to disrupt consensus and induce chain forks. 
It focuses on scenarios where malicious leaders may distribute conflicting block versions to different node groups, manipulating the consensus process to create forks. 
Specifically, if a malicious node is elected as the leader, it divides the entire set of nodes into two arbitrary groups without any coordination or knowledge of the honest nodes. The partitioning is solely determined by the malicious leader and is not based on any network topology or logical grouping. The malicious leader generates two different blocks with distinct transactions or data.
The number of malicious nodes increased from 0 to 19 to observe the systems' behavior under adversarial conditions.



\begin{figure}[ht]
  \centering
  \begin{minipage}[b]{0.48\textwidth}
    \centering
    \includegraphics[width=\linewidth]{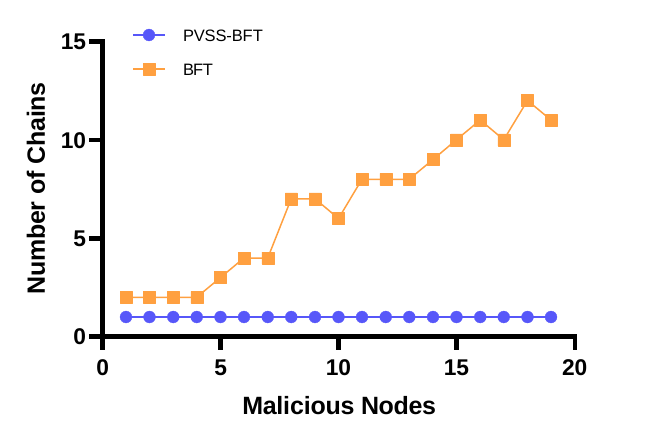}
    \caption{Fork occurrence comparison between baseline BFT and PVSS‑BFT under malicious nodes}
    \label{fig:image1}
  \end{minipage}
  \hfill
  \begin{minipage}[b]{0.48\textwidth}
    \centering
    \includegraphics[width=\linewidth]{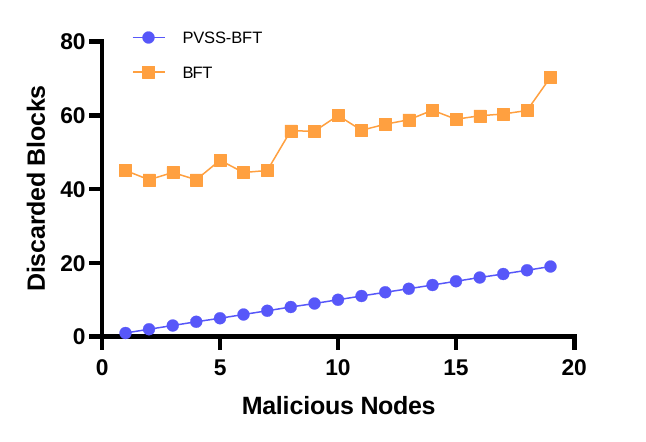}
    \caption{Block discard rates comparison between baseline BFT and PVSS‑BFT with increasing malicious nodes}
    \label{fig:image2}
  \end{minipage}
\end{figure}

\subsubsection{Forking Analysis}
Figure~\ref{fig:image1} illustrates the relationship between the number of malicious nodes and the incidence of chain forks. We show the fork resistance capabilities of both systems under increasing adversarial presence. In the baseline BFT system, we observe a clear positive correlation between the number of malicious nodes and fork occurrences. In contrast, the PVSS-BFT system shows remarkable resilience by maintaining zero forks, demonstrating its ability to cryptographically verify and reject conflicting proposals.

\subsubsection{Block Discard Analysis}
Figure~\ref{fig:image2} presents the number of blocks discarded under varying levels of adversarial presence. The baseline BFT system demonstrates a notable increase in discarded blocks as the number of malicious nodes rises, attributed to fork resolutions and conflicting proposals. The PVSS-BFT system consistently shows fewer discarded blocks, attributed to its fork prevention and rapid invalid proposal detection capabilities. 


\begin{figure}[ht]
  \centering
  \begin{minipage}[b]{0.48\textwidth}
    \centering
    \includegraphics[width=\linewidth]{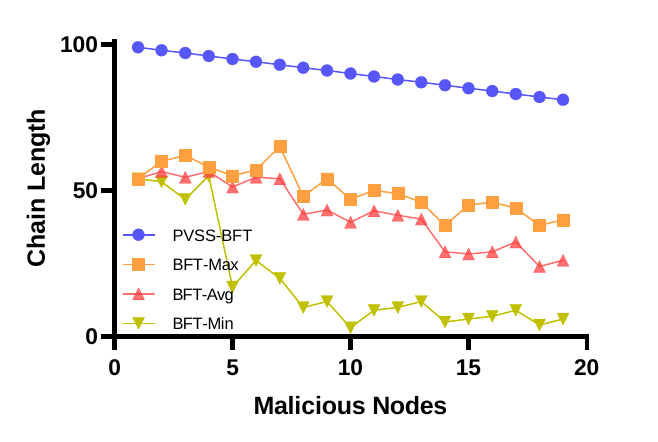}
    \caption{Impact of malicious nodes on chain length.}
    \label{fig:image3}
  \end{minipage}
  \hfill
  \begin{minipage}[b]{0.48\textwidth}
    \centering
    \includegraphics[width=\linewidth]{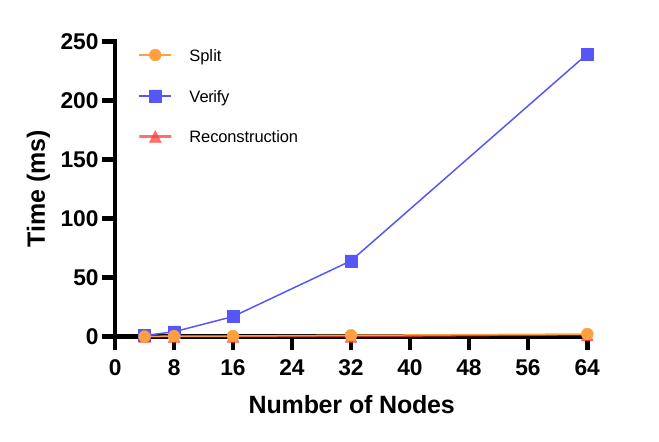}
    \caption{Time consumption of each step of PVSS.}
    \label{fig:image6}
  \end{minipage}
\end{figure}

\subsubsection{Chain Length Dynamics}
As depicted in Figure~\ref{fig:image3}, the baseline BFT system experiences significant fluctuations in chain length as the number of malicious nodes increases. This instability can be attributed to the increasing number of forks created by malicious nodes. These forks lead to frequent chain reorganizations, causing the overall chain length to fluctuate dramatically. In contrast, the PVSS-BFT system maintains a stable and consistent chain length throughout test scenarios, proving its efficacy in countering adversarial disruptions and maintaining continuous ledger growth.



\begin{figure}[h!]
  \centering
  \begin{minipage}[b]{0.48\columnwidth}  
    \centering
    \includegraphics[width=\linewidth]{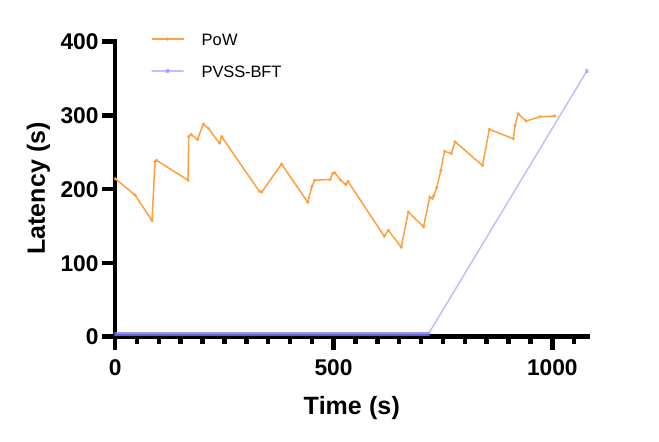}
    \caption{The latency of both our protocol and the longest-chain protocol.}
    \label{fig:latency_top}
  \end{minipage}
  \hfill  
  \begin{minipage}[b]{0.48\columnwidth}
    \centering
    \includegraphics[width=\linewidth]{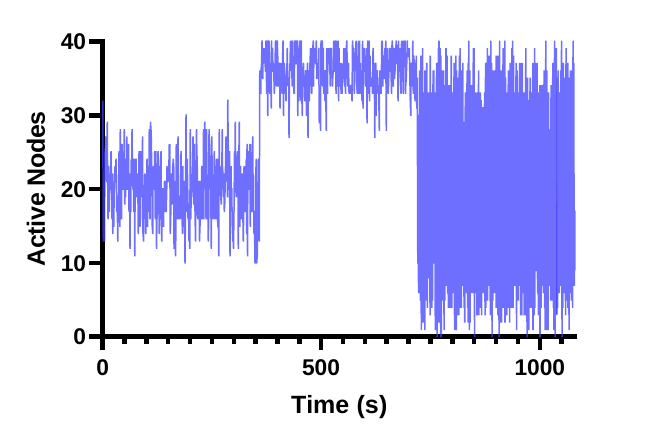}
    \caption{The participation level over time.}
    \label{fig:latency_bottom}
  \end{minipage}
\end{figure}

\subsection{Experiment 2: Latency Analysis}
\label{experiment2}
We first evaluated the PVSS's computational overhead by benchmarking the latency of its core operations: $SPLIT$, $VERIFY$, and $RECONSTRUCTION$. Figure \ref{fig:image6} summarizes the results for varying numbers of nodes ($n \in \{4, 8, 16, 32, 64\}$). At $n=64$, the average time for verification is approximately $230 ms$. The splitting and reconstruction phases take negligible time. While verification is the dominant cost, its latency remains practical. It is comparable to typical cross-region network latencies observed in different AWS servers as described in the full version of \cite{kelkar2023themis}. Crucially, given our protocol's phase time assumption of $\Delta=1s$ per phase (adopted based on the experimental setup in \cite{momose2022constant}), the computational overhead of all PVSS operations fits comfortably within this budget. Therefore, the PVSS computations do not impose a bottleneck.

We then evaluate how our protocol performs under varying participation levels, using the “Longest-Chain” approach from Momose and Ren \cite{momose2022constant} as a baseline. To the best of our knowledge, theeir experiment is the only publicly documented framework that models nodes switching online/offline at second-level granularity, while existing sleepy models (e.g., \cite{malkhi2023towards}\cite{d2023streamlining}) report performance only through theory. We emphasize that this comparison is strictly about latency under dynamic participation.
Figure~\ref{fig:latency_top} and ~\ref{fig:latency_bottom} illustrate how our protocol performs relative to the longest-chain protocol. For the longest-chain baseline, we configured the system to produce blocks approximately every 15 seconds, with a confirmation length of \( k = 10 \) blocks. Inspired by the experimental setup in~\cite{momose2022constant}, we consider three distinct participation level stages over a period of approximately 1080 time units (seconds): low participation, high participation, and unstable participation.

\begin{enumerate}
    \item Low participation (0-360 seconds): Initially, 20 nodes are active. The total active participation then fluctuates around 50\% of all nodes, following an approximate sinusoidal pattern.
    \item High participation (360-720 seconds): The participation probability was consistently set at 90\%, maintaining more than 30 active nodes. 
    \item Unstable participation (720-1080 seconds): The number of active nodes was highly volatile, fluctuating rapidly each second from 0 to 40.
\end{enumerate}

In our protocol, we observe a constant latency that is significantly lower than the longest-chain protocol. This consistent low latency is maintained throughout the first two stages. However, when participation fluctuates greatly, our protocol can ensure security, but cannot make the ledger grow.

\textbf{Unstable Participation Analysis:} In the scenario of \emph{unstable participation}, nodes can independently change their status every second with a probability \( p \). In this context, an active node can become sleepy and a sleepy node can become active with equal probability. This can lead to a situation where our protocol is unable to make progress. This challenge primarily arises from the requirement of a specific threshold of nodes for secret reconstruction. We analyze the maximum tolerable probability \( p \) that allows the protocol to function correctly without compromising safety guarantees in the appendix\ref{Unstable Participation Analysis}. Each consensus round can tolerate up to 63\% of nodes being offline.

\section{Conclusion and Future Work}
\label{conclusion}
In this paper, we have presented an advanced sleepy BFT consensus model. By integrating PVSS and VRF into a four-phase consensus mechanism, our approach offers improved efficiency and security compared to existing models. We have provided formal proof for the safety and liveness properties of our protocol, demonstrating its robustness in the face of Byzantine adversaries and dynamic participation.  Our experimental evaluations corroborate the theoretical findings. In simulated adversarial conditions, the PVSS-BFT protocol demonstrated superior fork resistance and chain stability compared to traditional BFT systems. Latency tests revealed that our protocol maintains consistently low latency across varying participation levels, only showing limitations in extreme instability scenarios.

We acknowledge that PVSS inherently adds communication overhead due to the distribution of shares and commitments. While our computational benchmarks demonstrate feasibility, optimizing communication complexity, particularly at larger network scales, remains an important consideration. Future work includes investigating techniques to reduce overall message costs.
\section*{Acknowledgment}
This project is fully supported by the CloudTech-RMIT Green Bitcoin Joint Research Program.
\bibliography{reference}

\appendix
\section{Unstable Participation Analysis}
\label{Unstable Participation Analysis}
We consider a PVSS-BFT system with \( n \) nodes. Let \( X_{i,j} \) denote the number of active nodes in phase \( j \) of round \( i \), where \( j = 1, 2, 3, 4 \) and \( S_{i,j} \) denote the number of nodes confirmed as sleepy by phase \( j \) of round \( i \). The expected number of active nodes in the first phase of round \( i \) is given by:

\begin{equation}
\label{eq:steady_state_phase1}
E[X_{i,1}] = E[X_{i-1,1}] \times (1 - p)^4 + E[Y_{i-1}],
\end{equation}

where \( E[Y_{i-1}] \) is the expected number of newly activated nodes in round \( i - 1 \) that are eligible to participate in phase 1 of round \( i \).

For subsequent phases \( j = 2, 3, 4 \), since only nodes that were active in phase 1 and remain active can participate, the expected number of active nodes is:

\begin{equation}
\label{eq:expected_active_phases}
E[X_{i,j}] = E[X_{i,1}] \times (1 - p)^{(j)}.
\end{equation}

For phase 2, the expected number of active nodes and sleepy nodes are:
\begin{equation}
\label{eq:phase2_nodes}
\begin{split}
E[X_{i,2}] &= E[X_{i,1}] \times (1 - p)^2 \\
E[S_{i,2}] &= E[X_{i,1}] \times p
\end{split}
\end{equation}

For phase 3 (vote phase), the expected number of active nodes and sleepy nodes are:
\begin{equation}
\label{eq:phase3_nodes}
\begin{split}
E[X_{i,3}] = E[X_{i,1}] \times (1 - p)^3  \\
E[S_{i,3}] = E[X_{i,1}] \times (1-(1-p)^2)
\end{split}
\end{equation}

The expected number of newly activated nodes eligible for the next round is:
\begin{equation}
\label{eq:newly_activated_nodes}
\begin{split}
    E[Y_{i}] = &\ (n - E[X_{i,1}]) \times p \times (1 - p)^3 \\
              &+ (n - E[X_{i,1}] \times (1 - p)) \times p \times (1 - p)^2.
\end{split}
\end{equation}

To reconstruct the secret and reach a consensus, the protocol must satisfy two conditions:
\begin{equation}
\label{eq:success_conditions}
\begin{split}
X_{i,3} &\geq 0.5 \times X_{i,1} \quad \text{(vote phase)} \\
X_{i,4} &\geq \frac{X_{i,1} - S_{i,2}}{2} \quad \text{(confirm phase)}
\end{split}
\end{equation}

\subsubsection{Latency Analysis Using Normal Approximation}

Assuming the system has reached a steady state, i.e., \( E[X_{i,1}] = E[X_{i-1,1}] \), we have:
\begin{equation}
\label{eq:steady_state_exi1}
E[X_{i,1}] = E[X_{i,1}] \times (1 - p)^4 + E[Y_{i-1}],
\end{equation}
\begin{equation}
\label{eq:steady_state_solution}
E[X_{i,1}] = \frac{E[Y_{i-1}]}{1 - (1 - p)^4}.
\end{equation}

Given the synchronous network assumption, we can deterministically detect node failures in phases 1-2. However, these detections must be confirmed through the voting process in phase 3. The success probability \( P_{\text{success}} \) depends on two critical conditions:
\begin{equation}
\label{eq:success_probability_vote}
\begin{split}
    P_{\text{success\_vote}} = P\left( X_{i,3} \geq 0.5 \times X_{i,1} \right)
\end{split}
\end{equation}
\begin{equation}
\label{eq:success_probability_confirm}
\begin{split}
    P_{\text{success\_confirm}} = P\left( X_{i,4} \geq \frac{X_{i,1} - S_{i,2}}{2} \,\bigg|\, X_{i,3} \geq 0.5 \times X_{i,1} \right)
\end{split}
\end{equation}

For the vote phase condition, substituting the expected values:
\begin{equation}
\label{eq:vote_phase_constraint}
E[X_{i,3}] = E[X_{i,1}] \times (1 - p)^3 \geq 0.5 \times E[X_{i,1}]
\end{equation}

This simplifies to:
\begin{equation}
\label{eq:simplified_constraint}
(1 - p)^3 \geq \frac{1}{2}
\end{equation}

Solving for the maximum tolerable probability:
\begin{equation}
\label{eq:max_p_solution}
p \leq 1 - \frac{1}{\sqrt[3]{2}} \approx 0.21
\end{equation}

The confirm phase threshold is dynamically adjusted based on detected sleepy nodes from phases 1 and 2, but this adjustment only takes effect after successful voting. For the confirm phase, given successful voting, we have:
\begin{equation}
\label{eq:confirm_phase_constraint}
E[X_{i,4}] \geq \frac{E[X_{i,1}] - E[S_{i,2}]}{2}
\end{equation}

Where $E[S_{i,2}]$ is the sum of sleepy nodes detected in phase 1 and 2:
\begin{equation}
\label{eq:sleepy_nodes_phase3}
\begin{split}
E[S_{i,4}] &= E[X_{i,1}] \times (1-(1-p)^2)
\end{split}
\end{equation}

Substituting the expressions:
\begin{equation}
\label{eq:confirm_phase_expanded}
E[X_{i,1}] \times (1 - p)^4 \geq \frac{E[X_{i,1}] - E[X_{i,1}] \times (1-(1-p)^2)}{2}
\end{equation}

Let $p = 0.21$ (derived from vote phase constraint). Substituting this value into the confirm phase inequality shows that it is satisfied. Therefore, the vote phase constraint $p \leq 1 - \frac{1}{\sqrt[3]{2}} \approx 0.21$ is indeed the bottleneck of our protocol. Each consensus round can tolerate up to \( 1 - (1 - 0.21)^4 \approx 0.63 \) or 63\% of nodes being offline at some point during the round. This significant improvement over traditional fixed-threshold approaches is achieved through the combination of deterministic sleepy node detection in the synchronous network and dynamic threshold adjustment after successful voting.

\end{document}